\def\<{\langle}
\def\>{\rangle}
\newcommand{\comment}[1]{}
\newcommand{\h}{\mathcal{H}}
\newtheorem{theorem}{Theorem}
\newtheorem{example}{Example}
\newcommand{\bra}[1]{\langle #1|}
\newcommand{\ket}[1]{|#1\rangle}
\newcommand{\ip}[2]{\langle #1|#2\rangle}
\def\lh{\ensuremath{\mathcal{L(H)}}}
\def\dh{\ensuremath{\mathcal{D(H)}}}
\newcommand{\op}[2]{|#1\rangle \langle #2|}
\newcommand{\tr}{{\rm tr}}
\newcommand{\Prop}{\mathcal{P}}
\begin{document}
%
\title{Entanglement Verification, with or without tomography}

\author{\IEEEauthorblockN{Nengkun Yu}
\IEEEauthorblockA{Centre for Quantum Software and Information,\\
   Faculty of Engineering and Information Technology, \\ University of
   Technology Sydney, NSW 2007, Australia\\
Email: nengkunyu@gmail.com}
}


%


\maketitle

\begin{abstract}
Multipartite entanglement has been widely regarded as key resources in distributed quantum computing,
for instance, multi-party cryptography, measurement based quantum computing, quantum algorithms.
It also plays a fundamental role in quantum phase transitions, even responsible for transport efficiency in biological systems.

Certifying multipartite entanglement is generally a fundamental task.
Since an $N$ qubit state is parameterized by $4^N-1$ real numbers,
one is interested to design a measurement setup that reveals multipartite entanglement with as little effort as possible,
at least without fully revealing the whole information of the state, the so called ``tomography", which requires exponential energy.

In this paper, we study this problem of certifying entanglement without tomography in the constrain that only single copy measurements can be applied.
This task is formulate as a membership problem related to a dividing quantum state space,
therefore, related to the geometric structure of state space.
We show that universal entanglement detection among all states can never be accomplished without full state tomography.
Moreover, we show that almost all multipartite correlation, include genuine entanglement detection, entanglement depth verification,
requires full state tomography. However, universal entanglement detection among pure states can be much more efficient, even we only allow local measurements. Almost optimal local measurement scheme for detecting pure states entanglement is provided.
\end{abstract}


%

\section{Introduction}
Quantum computing has long seemed like one of those technologies that are 20 years away,
and always will be. But 2017 could be the year that the field sheds its research-only image.

The world leading IT giants Google and Microsoft recently hired a host of leading lights, and have set challenging goals for this year.
Their ambition reflects a broader transition taking place at start-ups and academic research labs alike: to move from pure science towards engineering.

Quantum computing offers the potential of considerable speedup over classical computing for some important problems
such as prime factoring \cite{Sh94} and unsorted database search \cite{Gr97}. To take such advantage,
entanglement, one striking feature of quantum many-body systems, must be provided. With shared entanglement, two or more
parties can be correlated in the way that is much stronger than they
can be in any classical way. Entanglement has been widely studied since it has been proven to be an asset to information
processing and computational tasks. For instance, multipartite entanglement has been used as the central
resource for quantum key distribution in multipartite cryptography \cite{MS08}; it is the initial resource
in measurement based quantum computing \cite{RB01}; it is essential in understanding quantum phase transition \cite{SA11};
arguably, multipartite entanglement even should be responsible for transport efficiency in biological systems \cite{SIFW10}.
However, entanglement is still mysterious to many people including experts due to its complex structure.

To understand multipartite entanglement, reliable techniques for characterising entanglement properties of general quantum states are required.
Therefore, it is a fundamental problem to qualitatively test whether a given state is entangled or not.
The pure state case has been extensively studied and fruitful result has been obtained.
For instance, it is proved that almost all multi-qubit entangled states admit
Hardy-type proofs of non-locality without inequalities or probabilities in \cite{ACY16}.
In the setting of of multiple dishonest parties, it is showed how an agent of a quantum network can perform a distributed verification of a source creating multipartite
Greenberger-Horne-Zeilinger (GHZ) states with minimal resources, which is, nevertheless, resistant
against any number of dishonest parties in \cite{PCW+12}.
However, a complete answer of entanglement detection for general mixed states is still missing so far.
A considerable number of different separability criterions have been discovered, including the famous
Positive Partial Transpose(PPT) criterion \cite{PER96}, and Gurvits discovered it lies in the computational complexity class NP-Hard \cite{GUR04}, a
By borrowing idea from functional analysis, entanglement witnesses has been introduced to detect entanglement \cite{HHH96,TER00}.
A more challenging question is the detection of genuine multipartite entanglement, extensive study has not yielded satisfying results yet.

Entanglement detection problem is naturally fallen into the framework of quantum property testing, or quantum characterization,
verification and validation, where one can test any interested property.

Now in the field of quantum computing, we are with a quantum version of the ``big data" problem:
the data collected from quantum systems generated in our labs are growing exponentially because the parameters
is growing exponentially as the number of qubits grows. For instance, an $n$-qubit state has been created in our lad
as the resource of measurement based quantum computing \cite{RB01},
we want to see whether our preparation is correct.

Quantum property testing can be viewed in different settings:
The first one is that the mathematical description of the quantum state is given, in other words,
the complete information of the quantum object is known.
Another one is that the quantum state is given as black box, where one can access its information by measuring it.
Even in the latter setting, two very different scenarios should be considered, statistical fluctuations or accurate measurements.
In the former one, the measurement outcomes are just bit strings distributed according to the outcome probability,
see \cite{MW13} as an excellent survey.
In the latter case,
the measurements of experiments are accurate in the sense that the average of the measurement outcomes is exactly the probability distribution
corresponding to the measurement.
In this paper, we will focus on the latter one.

The general quantum property testing in our setting can be viewed as follows,
\begin{quote}
{\bf Quantum property testing}\\
Let $\mathcal{Q}$ be the set of quantum quantum states.
A subset $\Prop\subseteq{\mathcal{Q}}$ is called a \emph{property}.
An \emph{quantum property tester} for~$\Prop$ is an algorithm (quantum procedure) that receives a black box as input $x\in \mathcal{Q}$. In the former case, the algorithm accepts; in the latter case, the algorithm rejects.
\end{quote}
A property $\Prop$ is called trivial if $\Prop=\mathcal{Q}$ or $\Prop=\emptyset$.

Reconstructing the mathematical description of the given quantum states is called ``quantum state tomography".
Of course, one can obtain any information about this quantum state via quantum state tomography. However, an $N$ qubit state is parameterized by $4^N-1$ real numbers, therefore, informational complete measurements consist of exponential many observables, which is generally impossible.
Formally, we regared the given quantum object as resource, and the goal of property testing is to test the property by accessing the object as less as possible. Therefore, we can define the sample complexity of the property $\Prop$ be the infimum on the number of access the object among all quantum property tester for~$\Prop$. Notice that, the thing we care mostly is how many times do we need to access the quantum object to accomplish the property testing, not the post processing time of the algorithm.
An optimal algorithm may heavily rely on
collective measurements on many identical copies of given states. This
is not friendly for current experimental technology, as collective measurements are
usually much more difficult to implement than measuring single-copy
ones. We will focus on measurements which only applies on single-copy of quantum state.

By noticing that these problems are decision problems with 1 bit outcome, one might hope to achieve the answers with very small number of measurements,
or at least with something less than an informationally complete set-up.
The bipartite version of this problem has been studied recently.
Indeed, it was recently showed that testing whether a bipartite state is entangled or not requires an
informationally-complete measurement \cite{CaHeKaScTo16,Luetal16,CaHeScTo17}.
In \cite{CaHeScTo17}, various sufficient criteria are given, under which the informationally-incomplete measurements can not
reveal the property for unknown quantum state with certainty.
Compare with bipartite entanglement, entanglement in the multipartite setting turns
out much richer and more delicate to characterize.

In this paper, we are going to study two versions of the multipartite entanglement detection problem: We are giving multipartite quantum states, how
do we universally detect entanglement through physical observables?
In the first version, we do not have any further information of the state other than the state space it lives in.
In other words, it can be any mixed state in that state space.
We show that there is no such procedure which can detect multipartite entanglement without full state tomography among all mixed state.
Actually, we prove the more stronger version: For any property that are invariant under stochastic local operations assisted by classical communication(SLOCC)
requires full state tomography unless it is a trivial property if it contains some positive element but not all of them.

Due to the key role of multipartite entanglement in distributed quantum computing, our results can be interpreted as follows, in distributed quantum computation, one can not verify that whether the shared state is entangled or not without reconstructing the state using exponential measurement.

In the second version, we assume that state is pure, and then we provide an almost optimal quantum procedure to detect multipartite entanglement.
Our algorithm only costs linear number of  ``local" measurements, where ``local" means we only need to implement individual measurement on subsystems.
This is extremely friendly for current available technology.

\textbf{Structure of the Paper.} In Section II, we provide technical preliminaries of the basic quantum mechanics.
In Section III, we give the investigation on entanglement together with examples for illustration.
In Section IV, we show that if we do not have any prior information on the given quantum state,
then detecting its entanglement property requires full state tomography. Actually, almost all SLOCC equivalence property here required full state tomography.
In Section V, we show that if we know that the given quantum state is pure, we provide one non-adaptive scheme and one adaptive scheme to detect entanglement which
are exponential faster than doing full state tomography.
Finally, in Section VII, we offer conclusions and some
highlight open problems.

\section{Preliminaries}
For convenience of the reader, we briefly recall some basic notions
from linear algebra and quantum theory which are needed in this paper.
For more details, we refer to \cite{NC00}.

\subsection{Basic linear algebra}
According to a basic postulate of quantum mechanics, the state space
of an isolated quantum system is a Hilbert space. In this paper, we
only consider finite-dimensional Hilbert
spaces. We briefly recall some basic
notions from Hilbert space theory. We write $\mathbb{C}$ for the set
of complex numbers. For each complex number $c\in \mathbb{C}$,
$c^*$ stands for the conjugate of $c$.
An {\it inner product space} $\h$ is a vector space equipped with an inner
product function $$\langle\cdot|\cdot\rangle:\h\times \h\rightarrow \mathbb{C}$$
such that
\begin{enumerate}
\item
$\langle\psi|\psi\rangle\geq 0$ for any $|\psi\>\in\h$, with
equality if and only if $|\psi\rangle =0$;
\item
$\langle\phi|\psi\rangle=\langle\psi|\phi\rangle^{\ast}$;
\item
$\langle\phi|\sum_i c_i|\psi_i\rangle=
\sum_i c_i\langle\phi|\psi_i\rangle$.
\end{enumerate}
For any vector $|\psi\rangle\in\h$, its
length $|||\psi\rangle||$ is defined to be
$\sqrt{\langle\psi|\psi\rangle}$, and it is said to be {\it normalized} if
$|||\psi\rangle||=1$. Two vectors $|\psi\>$ and $|\phi\>$ are
{\it orthogonal} if $\<\psi|\phi\>=0$. An {\it orthonormal basis} of a Hilbert
space $\h$ is a basis $\{|i\rangle\}$ where each $|i\>$ is
normalized and any pair of them are orthogonal.

Let $\lh$ be the set of linear operators on $\h$. For any $A\in
\lh$, $A$ is {\it Hermitian} if $A^\dag=A$ where
$A^\dag$ is the adjoint operator of $A$ such that
$\<\psi|A^\dag|\phi\>=\<\phi|A|\psi\>^*$ for any
$|\psi\>,|\phi\>\in\h$. The fundamental {\it spectral theorem} states that
the set of all normalized eigenvectors of a Hermitian operator in
$\lh$ constitutes an orthonormal basis for $\h$. That is, there exists
a so-called spectral decomposition for each Hermitian $A$ such that
$$A=\sum_i\lambda_i |i\>\<i|=\sum_{\lambda_{i}\in spec(A)}\lambda_i E_i,$$
where the set $\{|i\>\}$ constitutes an orthonormal basis of $\h$, $spec(A)$ denotes the set of
eigenvalues of $A$,
and $E_i$ is the projector to
the corresponding eigenspace of $\lambda_i$.
A linear operator $A\in \lh$ is {\it unitary} if $A^\dag A=A A^\dag=I_\h$ where $I_\h$ is the
identity operator on $\h$.

The {\it trace} of $A\in\lh$ is defined as $\tr(A)=\sum_i \<i|A|i\>$ for some
given orthonormal basis $\{|i\>\}$ of $\h$. It is worth noting that
trace function is actually independent of the orthonormal basis
selected. It is also easy to check that trace function is linear and
$\tr(AB)=\tr(BA)$ for any operators $A,B\in \lh$.

A matrix $A$ is called semi-definite positive if it is Hermitian and has no negative eigenvalues.
A matrix $A$ is called positive if it is Hermitian and has positive eigenvalues only.
We use $A\geq 0$ and $A>0$ to denote the semi-definite positivity and positivity of $A$, respectively.

$||A||$ stands for the 2-norm of $A\in\lh$ by definition $||A||=\sqrt{\tr(A^{\dag}A)}$.

We use $I_{\h}$ to denote the identity operator of $\lh$.

\subsection{Basic quantum mechanics}

According to von Neumann's formalism of quantum mechanics
\cite{vN55}, an isolated physical system is associated with a
Hilbert space which is called the {\it state space} of the system. A {\it pure state} of a
quantum system is a normalized vector in its state space, and a
{\it mixed state} is represented by a density operator on the state
space. Here a density operator $\rho$ on Hilbert space $\h$ is a
semi-definite positive linear operator such that $\tr(\rho)= 1$.
Another
equivalent representation of density operator is probabilistic
ensemble of pure states. In particular, given an ensemble
$\{(p_i,|\psi_i\rangle)\}$ where $p_i \geq 0$, $\sum_{i}p_i=1$,
and $|\psi_i\rangle$ are pure states, then
$\rho=\sum_{i}p_i\op{\psi_i}{\psi_i}$ is a density
operator.  Conversely, each density operator can be generated by an
ensemble of pure states in this way.  The set of
density operators on $\h$ is defined as
$$\mathcal{D}=\{\ \rho\in\lh\ :\rho\ \mbox{is semi-definite positive and} \tr(\rho)=
\mbox{1}\}.$$

The general evolution of a quantum system is described by a trace-preserving super-operator on its state space: if the states of the system at times
$t_1$ and $t_2$ are $\rho_1$ and $\rho_2$, respectively, then
$\rho_2=\sum_k E_k\rho_1E_k^{\dag}$ for some $E_k$.

A (general) quantum {\it measurement} is described by a
Hermitian operator $O$. If the system is in state $\rho$, then the measurement outcome is
$$\tr(O\rho),$$
in the accurate measurement setting of this paper.
\subsection{Tensor Product of Hilbert Space}

The state space of a composed quantum system is the tensor product of the state spaces of its component systems. Let $\h_k$ be a Hilbert space with orthonormal basis $\{|\varphi_{i_k}\}$ for $1\leq k\leq n$.
Then the tensor product $\bigotimes_{k=1}^{n}\h_k$ is defined to be the Hilbert space with $\{|\varphi_{i_1}\rangle...|\varphi_{i_n}\rangle\}$ as its orthonormal basis.
Here the tensor product of two vectors is defined by a new
vector such that
$$\bigotimes_{k=1}^n\left(\sum_{i_k} \lambda_{i_k} |\psi_{i_k}\>\right)=\sum_{i_1,\cdots,i_n} \lambda_{i_1}\cdots\lambda_{i_n}
|\psi_{i_1}\>\otimes\cdots\otimes |\phi_{i_n}\>.$$ Then $\bigotimes_{k=1}^{n}\h_k$ is also a
Hilbert space where the inner product is defined as the following:
for any $|\psi_{k}\>,|\phi_{k}\>\in\h_k$
$$\<\psi_1\otimes\cdots\otimes \psi_n|\phi_1\otimes\cdots\otimes\phi_n\>=\<\psi_1|\phi_1\>_{\h_1}\cdots\<
\psi_n|\phi_n\>_{\h_n}$$ where $\<\cdot|\cdot\>_{\h_k}$ is the inner
product of $\h_k$.

In the bipartite case, the {\it partial trace} of $A\in\mathcal{L}(\h_1
\otimes \h_2)$ with respect to $\h_1$ is defined as
$\tr_{\h_1}(A)=\sum_i \<i|A|i\>$ where $\{|i\>\}$ is an orthonormal
basis of $\h_1$. Similarly, we can define the partial trace of $A$
with respect to $\h_2$. Partial trace functions are also
independent of the orthonormal basis selected.

For a mixed state $\rho$ on $\h_1 \otimes \h_2$,
partial traces of $\rho$ have explicit physical meanings: the
density operators $\tr_{\h_1}\rho$ and $\tr_{\h_2}\rho$ are exactly
the reduced quantum states of $\rho$ on the second and the first
component system, respectively.

\section{Entanglement}
In this section, we introduce some basic facts about the most important quantum feature---Entanglement.

Note that in general, the state of a
composite system cannot be decomposed into tensor product of the
reduced states on its component systems. A well-known example is the
 2-qubit state
$$|\Psi\>=\frac{1}{\sqrt{2}}(|00\>+|11\>).
$$
This kind of state is called {\it entangled state}.
To see the strangeness of entanglement, suppose a measurement $M_{0}=\op{0}{0}$ and $M_{1}=\op{1}{1}$
are applied on the first qubit
of $|\Psi\>$ (see the following for the definition of
quantum measurements). Then after the measurement, the second qubit will
definitely collapse into state $|0\>$ or $|1\>$ depending on whether
the outcome $\lambda_0$ or $\lambda_1$ is observed. In other words,
the measurement on the first qubit changes the state of the second
qubit in some way. This is an outstanding feature of quantum mechanics
which has no counterpart in classical world, and is the key to many
quantum information processing tasks  such as teleportation
\cite{BB93} and superdense coding \cite{BW92}.

In bipartite system, a pure state $\ket{\psi}$ is called product (or not entangled) if it is of form
$$\ket{\psi}=\ket{\psi_1}\ket{\psi_2}.$$
A density matrix $\rho$ is called separable (or not entangled) if it can be written as some convex combination of the density of product pure states, that is $p_i>0$ and semi-definite positive
$\rho_{i,1}$s and $\rho_{i,2}$s such that
$$\rho=\sum_i p_i\rho_{i,1}\otimes \rho_{i,2}.$$
Otherwise, it is called entangled.

An $n$-particle pure state $\ket{\psi}$ is called product if it is of form $$\ket{\psi}=\ket{\psi_1}\cdots\ket{\psi_n}.$$
A density matrix $\rho$ is called separable if it can be written as some convex combination of the density of product pure states. Otherwise, it is called entangled.

\subsection{Positive Partial Transpose}
A bipartite quantum state $\rho\in \mathcal{L}(\h_1\otimes \h_2)$ is
called to have positive partial transpose (or simply PPT) if $\rho^{\Gamma_{\h_1}}\geq 0$, where ${\Gamma_{\h_1}}$ means the partial transpose with respect to the party
$\h_1$, i.e.,
\begin{equation*}
(\op{ij}{kl})^{\Gamma_{\h_1}}=\op{kj}{il}.
 \end{equation*}

This definition can be seen more clearly if we write the state as a block matrix:

The result is independent of the party that was transposed, because
\[
\rho={\begin{bmatrix}A_{11}&A_{12}&A_{13}&\cdots &A_{1n}\\A_{21}&A_{22} &A_{23}&\cdots &A_{2n}\\A_{31}&A_{32}&A_{33}&\cdots &A_{3n}\\ \vdots &\vdots &\vdots &\cdots &\vdots\\ A_{n1}&A_{n2}&A_{n3}&\cdots &A_{nn}\\ \end{bmatrix}}
\]
Where $n$ equals the dimension of $\h_a$, and each block is a square matrix of dimension equals the dimension of $\h_2$. Then the partial transpose is
\[
\rho^{\Gamma_2}={\begin{bmatrix}A^T_{11}&A^T_{12}&A^T_{13}&\cdots &A^T_{1n}\\A^T_{21}&A^T_{22} &A^T_{23}&\cdots &A^T_{2n}\\A^T_{31}&A^T_{32}&A^T_{33}&\cdots &A^T_{3n}\\ \vdots &\vdots &\vdots &\cdots &\vdots\\ A^T_{n1}&A^T_{n2}&A^T_{n3}&\cdots &A^T_{nn}\\ \end{bmatrix}}
\]
It had been observed by Peres that any separable state has positive partial transpose \cite{PER96},
\[
\rho=\sum_i p_i\rho_{i,1}\otimes \rho_{i,2}\\
\Rightarrow \rho^{\Gamma_2}=\sum_i p_i\rho_{i,1}\otimes (\rho_{i,2})^T\geq 0.
\]
The result is independent of the party that was transposed, because $\rho^{\Gamma_1}=(\rho^{\Gamma_2})^T$.

In \cite{KCKL00}, it was proved that all $2\otimes n$ density operators that remain invariant after partial transposition with respect to the first system are separable.
\subsection{Example}

Notice that, a multipartite pure state is product if and only if it is product under any bipartition. However, this is not true for mixed state.

Before going to further introduction on multipartite entanglement,
we first give one example to illustrate the significant difference and complex of multipartite entanglement and bipartite entanglement.

Define three-qubit state as $$\rho=\frac{1}{4}(I-\sum_{i=1^4}\op{\phi_i}{\phi_i}),$$
where $\ket{\phi_i}$ are defined as
\begin{align*}
\ket{\phi_1}=\ket{0,1,+},\\
\ket{\phi_2}=\ket{1,+,0},\\
\ket{\phi_3}=\ket{+,0,1},\\
\ket{\phi_4}=\ket{-,-,-},
\end{align*}
with $\ket{\pm}=\frac{\ket{0}\pm\ket{1}}{\sqrt{2}}$.

One can verify that:

i). $\rho$ is invariant under partial transpose of the any qubit.

Therefore, according to the result just been mentioned of \cite{KCKL00}, $\rho$ is separable in any bipartition.

ii). There is no product state $\ket{\psi_1}\ket{\psi_2}\ket{\psi_3}$ which is orthogonal to all $\ket{\phi_i}$. That is, no product state lives in the orthogonal complement of
the space spanned by $\ket{\phi_i}$.

Notice that $\rho$ is proportional to the projection on the orthogonal complement of
the space spanned by $\ket{\phi_i}$. Therefore, $\rho$ is entangled as it can never be written as the convex combination of the density matrix of product states.

We have constructed three partite entangled state which has no bipartite entanglement.

In other words, multipartite entanglement enjoys much richer structure rather than union of bipartite entanglement.

Remark: The example we constructed here is called unextendable product bases (UPB) investigated in \cite{BDM+99}.

\subsection{Genuine entanglement}

An $n$-particle pure state $\ket{\psi}$ is called genuine entangled if it is not a product state of any bipartition.
To defined the genuine entangle for mixed states, there are two inequivalent ways:

i). A density matrix $\rho$ is called genuine entangled if for any fixed bipartition, it can not be written as some convex combination of the density of pure states which is product in this bipartition.

ii). A density matrix $\rho$ is called genuine entangled if for it can not be written as some convex combination of the density of pure states which is product for any bipartition.

The second definition is stronger than the first one as the bipartition for different pure state can be different.

\subsection{Entanglement depth}
In \cite{SM01}, entanglement depth is introduced to characterize the minimal number of particles in a system that are mutually entangled.

In an $n$-particle system $\h=\h_1\otimes \h_2\otimes\cdots \h_n$,
$\ket{\psi}$ is called $k$-product (separable) if it can be written as
\begin{equation*}
\ket{\psi}=\ket{\psi^{(1)}}\otimes\ket{\psi^{(2)}}\cdots\otimes\ket{\psi^{(k)}},
\end{equation*}
where decomposition corresponds to a partition of the $n$ particles, $\ket{\psi^{(i)}}$ is a genuine entangled state in
$\h_{S_i}=\otimes_{j\in S_i} \h_j$ with $\bigcup S_i=\{1,2,\cdots,n\}$ and $S_i\bigcap S_k=\emptyset$ for $i\neq k$.
The entanglement depth of $\ket{\psi}$, $\mathcal{D}(\psi)$, is defined as the largest cardinality of $S_i$.

An $n$-particle density matrix $\rho$ is called
$k$-separable if it can be written as some convex combination of $k$-separable pure states. The entanglement depth of $\rho_N$ is defined as following,
\begin{equation*}
\mathcal{D}(\rho)=\min_{\rho_N=\sum p_i \ket{\psi^i}\bra{\psi^i}}\max_i~~ \mathcal{D}(\psi^i),
\end{equation*}
where each $\ket{\psi^i}$ is an $N$-particle pure state and $\mathcal{D}(\psi^i)$ is the entanglement depth of $\ket{\psi^i}\bra{\psi^i}$.

\section{Mixed State Property Testing}

In this section, we study the possibility of detecting multipartite correlation
without full state tomography by measuring only single-copy
observables. For simplicity, we allow single-copy
observables are only allowed to be measured nonaddaptively.

We assume that the state is mixed state, and the only know information about this state is the Hilbert space it lives in.
We want to test properties of mixed states. In particular, we are interested in multipartite correlations, SLOCC invariant properties.

Let $\h=\bigotimes_{k=1}^{n}\h_k$ with $d_k$ being the dimension of $\h_k$.
The set (state space) of
density operators on $\h$ is defined as
$$\mathcal{D}=\{\ \rho\in\lh\ :\rho\ \mbox{is semi-definite positive and} \tr(\rho)=
\mbox{1}\}.$$

The concept of stochastic local operations assisted by classical communication (SLOCC) has been used to study entanglement classification
\cite{DVC00,GW13} and entanglement transformation \cite{YCGD10,YGD14}.
Two $n$-partite quantum states $\rho$ and $\sigma$ are called SLOCC equivalent if
\begin{equation*}
\rho=(A_1\otimes A_2\otimes\cdots A_n)\sigma(A_1\otimes A_2\otimes\cdots A_n)^{\dag}
\end{equation*}
holds for some non-singular $A_i\in\lh_k$.

A property $\mathcal{P}\in \mathcal{D}$ is called SLOCC invariant if $\rho\in\mathcal{P}$
implies $\rho'\in\mathcal{P}$ for all $\rho'$ being SLOCC equivalent to $\rho$.

Our main result is given as follows,

\begin{theorem}
For any stochastic local operations assisted by classical communication (SLOCC) invariant property $\mathcal{P}\subsetneq \mathcal{D}$,
such that both $\mathcal{P}$ and $\mathcal{D}\setminus \mathcal{P}$ contain some positive elements respectively,
it is impossible to determine with certainty of whether $\rho\in P$ or not without fully state tomography.

In other words, $t:=\Pi_{i=1}^n d_i^2-1$ measurements are necessary.
\end{theorem}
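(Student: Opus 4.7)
The plan is to argue by contradiction. Assume that some $k<t$ observables $O_1,\dots,O_k$ suffice for a correct tester on $\mathcal{D}$. By dimension counting, $\mathrm{span}\{I,O_1,\dots,O_k\}$ has real dimension at most $k+1<D^2=\prod_i d_i^2$ inside the Hermitian operators on $\h$, so its Hilbert-Schmidt complement $\mathcal{L}^\perp$ contains a non-zero traceless Hermitian $X$ with $\tr(O_j X)=0$ for all $j$. For any full-rank $\rho$, the state $\rho+\varepsilon X$ is again a density matrix for small $|\varepsilon|$ and produces identical measurement outcomes $\tr(O_j(\rho+\varepsilon X))=\tr(O_j\rho)$; hence a correct tester must satisfy $\rho\in\mathcal{P}\Leftrightarrow \rho+\varepsilon X\in\mathcal{P}$. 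In other words, $\mathcal{P}$ is locally invariant under every translation in $\mathcal{L}^\perp$ at each full-rank point.

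The crux is to combine this translation invariance with the given SLOCC invariance by iterating Lie brackets. I would pass to the positive cone $\hat{\mathcal{D}}=\{M\succeq 0\}$ and its lift $\hat{\mathcal{P}}=\{M\in\hat{\mathcal{D}}:M/\tr M\in\mathcal{P}\}$, on which both the constant vector fields $T_X(M)=X$ (for $X\in\mathcal{L}^\perp$) and the unnormalized SLOCC flows $L_H(M)=HM+MH^\dag$ (for $H$ in the local Lie algebra $\mathfrak{g}=\bigoplus_i\mathfrak{gl}(d_i)$) are tangent to $\hat{\mathcal{P}}$. Their Lie bracket evaluates to $[T_X,L_H](M)=HX+XH^\dag$, again a constant vector field, so iterating brackets forces $\hat{\mathcal{P}}$ to be invariant under translations in every direction of the bimodule $\mathcal{A}\cdot X\cdot\mathcal{A}$, where $\mathcal{A}\subseteq\mathcal{L}(\h)$ is the associative algebra generated by $\mathfrak{g}$. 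The decisive algebraic observation is $\mathcal{A}=\mathcal{L}(\h)$: products such as $(H_1\otimes I\otimes\cdots)(I\otimes H_2\otimes\cdots)=H_1\otimes H_2\otimes I\otimes\cdots$ build up all elementary tensors $H_1\otimes\cdots\otimes H_n$, whose span is all of $\mathcal{L}(\h)$. Since $X\neq 0$, $\mathcal{L}(\h)\cdot X\cdot\mathcal{L}(\h)=\mathcal{L}(\h)$, and taking Hermitian parts the tangent space to $\hat{\mathcal{P}}$ at every full-rank $M$ fills $\mathrm{Herm}(\h)$.

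Consequently $\hat{\mathcal{P}}$ contains an open neighborhood of every positive element it contains, so $\mathcal{P}$ contains an open $\mathcal{D}$-neighborhood of every positive element of $\mathcal{P}$; by the symmetric argument $\mathcal{D}\setminus\mathcal{P}$ contains an open neighborhood of every positive element outside $\mathcal{P}$. Since the full-rank locus $\{\rho\in\mathcal{D}:\rho>0\}$ is open and connected, it cannot be written as the disjoint union of two non-empty open subsets, contradicting the hypothesis that both $\mathcal{P}$ and $\mathcal{D}\setminus\mathcal{P}$ meet the positive locus. Therefore $k\geq t$, and full state tomography is necessary.

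The main obstacle I expect is verifying that iterated Lie brackets really do produce translation directions throughout $\mathcal{L}(\h)\cdot X\cdot\mathcal{L}(\h)$ rather than being trapped in a proper invariant subspace. This reduces to the elementary identity that the associative algebra generated by the local Lie algebra $\mathfrak{g}$ coincides with $\mathcal{L}(\h)$, which follows from $\bigotimes_i\mathcal{L}(\h_i)=\mathcal{L}(\bigotimes_i\h_i)$. A much weaker bound $k\gtrsim 2\sum_i d_i^2$ follows from counting a single SLOCC-orbit dimension alone; the sharp bound $k\geq t$ genuinely requires that iterated brackets exploit the full associative algebra, and this is the only non-trivial ingredient in the argument.
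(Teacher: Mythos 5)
Your proposal is correct in substance and shares the paper's three--stage architecture: (i) a dimension count produces a nonzero traceless Hermitian $X$ with $\tr(O_jX)=0$, so membership in $\mathcal{P}$ is unchanged under small translations by $X$ at full--rank states; (ii) the property is lifted to the positive cone and SLOCC invariance is used to propagate this single free direction to a spanning set of free directions; (iii) both the lifted $\mathcal{P}$ and its complement are then open in the connected positive--definite locus, contradicting the hypothesis that each meets it. Where you genuinely diverge is stage (ii). The paper stays at the group level: it shows each conjugate $(A_1\otimes\cdots\otimes A_n)X(A_1\otimes\cdots\otimes A_n)^{\dag}$ is again free, proves the spanning statement via an explicit polarization identity writing $M\cdot I$ and $I\cdot M$ as combinations of the three superoperators $(M\pm yI)\cdot(M\pm yI)^{\dag}$ and $(M-iyI)\cdot(M-iyI)^{\dag}$ followed by composition and tensoring, and then obtains openness by moving along a basis of free directions one at a time while keeping partial sums positive. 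Your infinitesimal route --- bracketing the constant field $T_X$ against $L_H(M)=HM+MH^{\dag}$ and reducing to the fact that the associative algebra generated by the local Lie algebra is all of $\mathcal{L}(\mathcal{H})$ --- reaches the same spanning conclusion more conceptually. The one step you must shore up is that the lifted $\mathcal{P}$ is an arbitrary set, not a manifold, so ``tangency'' of Lie brackets does not by itself transfer invariance: either invoke the Nagano--Sussmann orbit theorem (the set is a union of orbits, and each orbit is an immersed submanifold whose tangent space contains the generated Lie algebra, hence is open here), or replace the bracket by the elementary conjugation identity $G^{-1}\bigl((GMG^{\dag})+sX\bigr)(G^{-1})^{\dag}=M+sG^{-1}X(G^{-1})^{\dag}$, which produces the new free directions with no differential geometry and reduces everything to the spanning lemma you already state. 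With that repair the argument closes, and your associative--algebra observation is an attractive alternative to the paper's polarization computation.
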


More precisely, for any set of
informationally-incomplete measurements, there always exists two
different states, $\rho\in\mathcal{P}$ and a
$\sigma\notin \mathcal{P}$, which are not distinguishable according to the measurement results.
That is, for any set of observables (Hermitian matrices) $\{O_1,O_2,\cdots,O_{s}\}$ of $\h$ with $s<t$, there always exist two
different states, $\rho\in\mathcal{P}$ and a
$\sigma\notin \mathcal{P}$, such that $\tr(O_i\rho)=\tr(O_i\sigma)$ for all $i$.

Geometrically, any open and SLOCC invariant nontrivial $\mathcal{P}$ is not `cylinder-like'. In other words,
the structural relation of $\mathcal{P}$ and $\mathcal{D}$ can not as (b).
\begin{figure}[htb]
  \label{fig:main}
  \centering
  \includegraphics[width=\columnwidth]{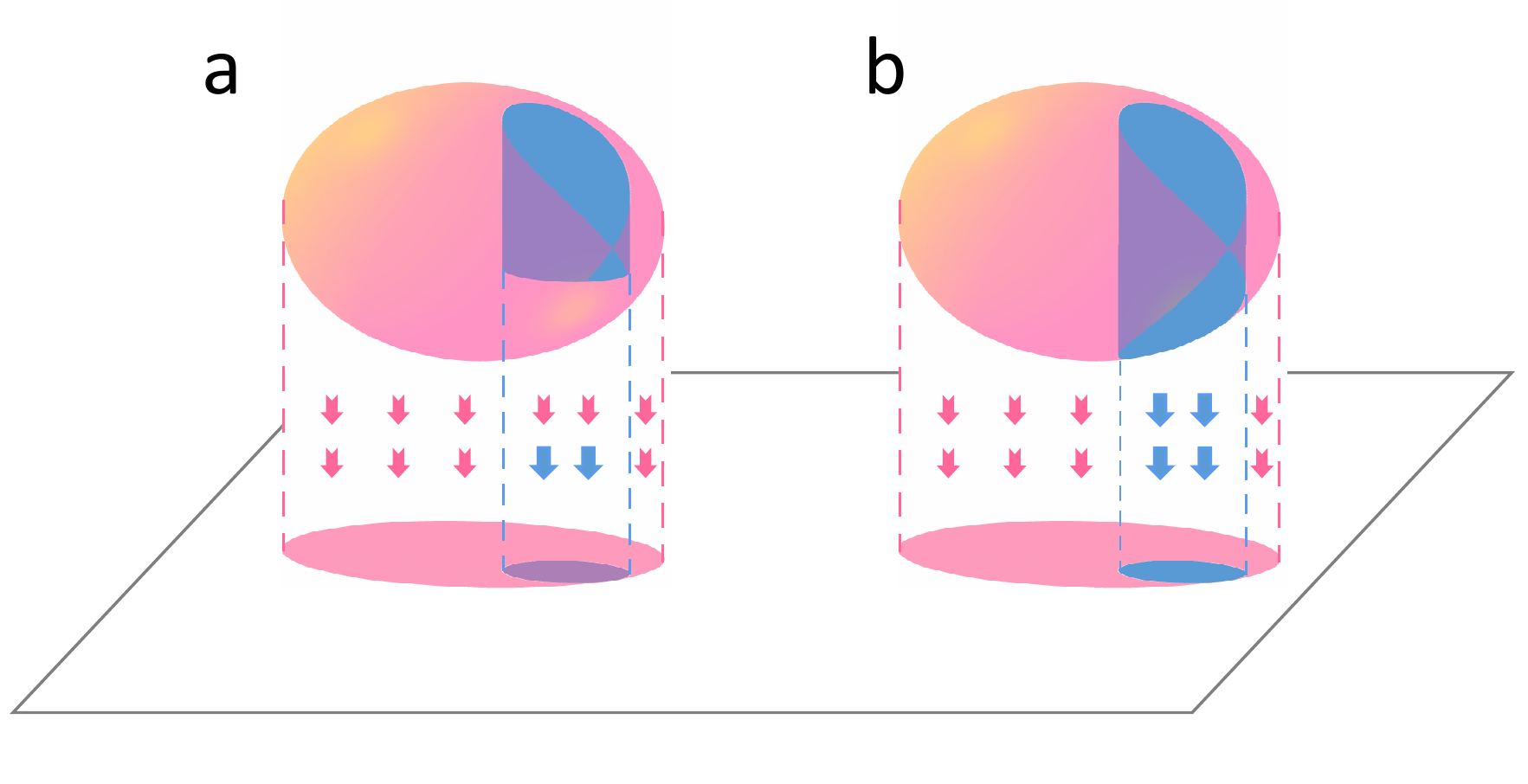}
  \setlength{\abovecaptionskip}{-0.00cm}
  \caption{\footnotesize{Geometry of property $\mathcal{P}$ and its complement. The top pink oval represents
      the set of all states, denoted by $\mathcal{A}$. (a) A set (blue) inside $\mathcal{A}$ which is not
      `cylinder-like'. Hence no projection onto any plane exists
      that can separate the set with the rest states. (b) A set (blue) that
      is an intersection of a generalized cylinder with $\mathcal{A}$ (i.e.
      `cylinder-like'). The projection onto the plane that is orthogonal to the boundary
      lines of the cylinder separates this set with the rest of the states. The bottom
      ovals are the images of the top sets onto a plane, which clear
      show a separation of the images of the blue set from the pink set in (b),
      but an overlap of images in (a).}}  \label{fig:theory}
\end{figure}

\begin{proof}
Notice that, for quantum state in $\h=\bigotimes_{k=1}^{n}\h_k$ with $d_k$ being the dimension of $\h_k$,
the informationally-complete measurements are set of linear independent Hermitian matrices
$\{N_1,N_2,\cdots,N_{t}\}$ as quantum states are trace one which reduces one dimension.

To prove the validity of this theorem, we assume the existence of $\{O_1,O_2,\cdots,O_{s}\}$ of $\h$ with $s<t$ such that for any pairs of $\rho$ and $\sigma$,
one can conclude that $\rho,\sigma\in\mathcal{P}$ or $\rho,\sigma\notin\mathcal{P}$ by giving $\tr(O_i\rho)=\tr(O_i\sigma)$ for all $i$.

The proof is divided into two steps.

STEP 1: We transfer the problem into the existence of informationally-incomplete measurements in testing properties of semi-definite positive operators.

We first generalize the property into all semi-definite operators on $\h$
$$\widetilde{\mathcal{D}}=\{M\in\lh\ :\ M~\mbox{is semi-definite positive}.\}$$

For any property of $\mathcal{D}$, denoted by $\mathcal{P}$, satisfies that $\mathcal{P}\subsetneq \dh$, we first generalize it into
property $\widetilde{\mathcal{P}}$ of $\widetilde{\mathcal{D}}$ as follows,
$$\widetilde{\mathcal{P}}=\{M\in\lh :M/\tr(M)\in\mathcal{P}, A~\mbox{is semi-definite positive}\}.$$

We observe that $\mathcal{P}$ is SLOCC invariant if and only if $\widetilde{\mathcal{P}}$ is SLOCC invariant in the sense that for all non-singular matrices $A_i$s,
$$
M\in \widetilde{\mathcal{P}}\Leftrightarrow (A_1\otimes A_2\otimes\cdots A_n)M(A_1\otimes A_2\otimes\cdots A_n)^{\dag}\in \widetilde{\mathcal{P}}.
$$

$\mathcal{P}$ contains some positive element if and only if $\widetilde{\mathcal{P}}$ is contains some positive element.
$\mathcal{D}\setminus\mathcal{P}$ contains some positive element if and only if $\widetilde{\mathcal{D}}\setminus\widetilde{\mathcal{P}}$ is contains some positive element.

More importantly, one can use the following set of observables
$\{O_0,O_1,O_2,\cdots,O_{s}\}$ with $O_0=I_{\h}$ to test $\widetilde{\mathcal{P}}$ of of $\widetilde{\mathcal{D}}$. Notice that
for any $\widetilde{\rho}\in \widetilde{D}$, we know that $\widetilde{\rho}\in \widetilde{\mathcal{P}}$ if and only if
$\rho\in \mathcal{P}$ with $\rho=\widetilde{\rho}/\tr(O_0\widetilde{\rho})$.
For $0<i$, $\tr(O_i\rho)=\tr(O_i\widetilde{\rho})/\tr(O_0\widetilde{\rho})$.
Thus, for an unknown $\widetilde{\rho}$ with $(o_0,o_1,\cdots,o_s)$ such that $o_i=\tr(O_i\widetilde{\rho})$,
one can conclude that $\widetilde{\rho}\in\widetilde{\mathcal{P}}$ if and only if
the quantum states (trace 1) corresponding to $(o_1/o_0,\cdots,o_s/o_0)$ are in $\mathcal{P}$.
On the other hand, $\{O_0,O_1,O_2,\cdots,O_{s}\}$ is not informationally-complete observalbes as $s<\Pi_{i=1}^n d_i^2-1$.
This indicates informationally-incomplete measurements which can detect $\widetilde{\mathcal{P}}$ with certainty among all $\widetilde{\mathcal{D}}$.

STEP 2: We show there is no informationally-incomplete measurements, $\{O_0,O_1,O_2,\cdots,O_{s}\}$,
which can detect properties of semi-definite positive operators with certainty.

Notice that, there exists an Hermitian $H\neq 0$ such that $\tr(O_iH)=0$ for all $0\leq i\leq s$.
This $H$ enjoys the following property which we called ``free".
For any $\widetilde{\rho}\in \widetilde{\mathcal{D}}$,
if $\widetilde{\rho}+rH\in \widetilde{\mathcal{D}}$ for some $r\in \mathbb{R}$,
then $\widetilde{\rho}+rH\in\widetilde{\mathcal{P}}$ iff $\widetilde{\rho}\in \widetilde{\mathcal{P}}$.

Since $\widetilde{\mathcal{P}}$ is SLOCC invariant, we can conclude that for any ``free" Hermitian $J$, any non-singular $A_i$s,
$$(A_1\otimes A_2\otimes\cdots A_n)J(A_1\otimes A_2\otimes\cdots A_n)^{\dag}$$ is also a ``free" Hermitian by the following observation.
Notice that $\widetilde{\mathcal{D}}$ is SLOCC in variant, then
for any $\widetilde{\rho}$ with $M=\widetilde{\rho}+r(A_1\otimes A_2\otimes\cdots A_n)J(A_1\otimes A_2\otimes\cdots A_n)^{\dag}\in \widetilde{\mathcal{D}}$,
$M\in\widetilde{\mathcal{P}}$
if and only if
$$(A_1^{-1}\otimes A_2^{-1}\otimes\cdots A_n^{-1})M(A_1^{-1}\otimes A_2^{-1}\otimes\cdots A_n^{-1})^{\dag}\in\widetilde{\mathcal{P}}.$$
That is $$(A_1^{-1}\otimes A_2^{-1}\otimes\cdots A_n^{-1})\widetilde{\rho}(A_1^{-1}\otimes A_2^{-1}\otimes\cdots A_n^{-1})^{\dag}+J\in\widetilde{\mathcal{P}}.$$
Invoking the fact that $J$ is ``free", this is equivalent to
$$(A_1^{-1}\otimes A_2^{-1}\otimes\cdots A_n^{-1})\widetilde{\rho}(A_1^{-1}\otimes A_2^{-1}\otimes\cdots A_n^{-1})^{\dag}\in\widetilde{\mathcal{P}}.$$
As $\widetilde{\mathcal{D}}$ being SLOCC invariant, this is true if and only if
$$\widetilde{\rho}\in\widetilde{\mathcal{P}}.$$

This above argument leads us to the fact that $(A_1\otimes A_2\otimes\cdots A_n)J(A_1\otimes A_2\otimes\cdots A_n)^{\dag}$ is also a ``free" Hermitian.

STEP 2 (a): In this part, we show that if the set of ``free" Hermitian matrices is not empty, it contains elements which form a basis of the whole space $\lh$.
In other words, there exist linear independent ``free" Hermitian matrices $H_1,H_2,\cdots,H_t$.

For any nonzero Hermitian $J$, $S=\lh$ where $S$ is the matrix space spanning by all
$$(A_1\otimes A_2\otimes\cdots A_n)J(A_1\otimes A_2\otimes\cdots A_n)^{\dag}$$ with $A_i$s being non-singular.

We actually prove a more general statment.
The set of the following linear maps
$$(A_1\otimes A_2\otimes\cdots A_n)\cdot(A_1\otimes A_2\otimes\cdots A_n)^{\dag}:\lh\rightarrow\lh,$$
spans the whole set of linear maps $\mathcal{L}(\lh,\lh):\lh\rightarrow\lh$.

We start from studying the case $n=1$. In this case, we are going to show that the following maps
$$M\cdot I,I\cdot M$$
lie in the span of
$$A\cdot A^{\dag}$$
with $A$ being non-singular.

Choose real $y\neq 0$ such that $M+yI,M-yI,M-iyI$ being non-singular.
It is easy to verify that $A\cdot I$ lies in the span of
\begin{eqnarray*}
(M+yI)\cdot (M+yI)^{\dag},\\
(M-yI)\cdot (M-yI)^{\dag},\\
(M-iyI)\cdot (M-iyI)^{\dag}.
\end{eqnarray*}
That is
\begin{eqnarray*}
M\cdot I=\frac{1-i}{4y}(M+yI)\cdot (M+yI)^{\dag}\\-\frac{1+i}{4y}(M-yI)\cdot (M-yI)^{\dag}\\-\frac{i}{2y}(M-iyI)\cdot (M-iyI)^{\dag},\\
I\cdot M=\frac{1+i}{4y}(M+yI)\cdot (M+yI)^{\dag}\\-\frac{1-i}{4y}(M-yI)\cdot (M-yI)^{\dag}\\+\frac{i}{2y}(M-iyI)\cdot (M-iyI)^{\dag}.\\
\end{eqnarray*}

One crucial observation is that
For non-singular $A,B$, $AB$ is still non-singular. Thus if any maps $\mathcal{E},\mathcal{F}$ lies in the span of non-singular
$$A\cdot A^{\dag},$$ their composition $\mathcal{E}\circ\mathcal{F}$ also lies in that span.

Therefore, for all $M,N$, we can first implement $M\cdot I$, then apply $I\cdot N$. This observation indicates that
$M\cdot N$
lie in the span of
$A\cdot A^{\dag}$
with $A$ being non-singular.

Notice that any linear maps from $\lh$ to $\lh$ can be written as linear combination of form $M\cdot N$. Thus, for the case $n=1$, the following linear maps
$A\cdot A^{\dag}$
spans the whole set of linear maps $\mathcal{L}(\lh,\lh):\lh\rightarrow\lh$.

Now back to the general $n$ case. Notice that any linear maps from $\lh$ to $\lh$ can be written as linear combination of form $M\cdot N$.
$M\cdot N$ can be written into form $$\sum (A_{1p}\otimes A_{2p}\otimes\cdots A_{np})\cdot(B_{1p}\otimes B_{2p}\otimes\cdots B_{np})$$
with $A_{ip},B_{ip}$ being non-singular matrix of $\h_i$. We can first implement $A_{ip}\cdot B_{ip}$, then tensor them together.
By linearity, we show that the set of the linear maps
$$(A_1\otimes A_2\otimes\cdots A_n)\cdot(A_1\otimes A_2\otimes\cdots A_n)^{\dag}:\lh\rightarrow\lh,$$
spans the whole set of linear maps $\mathcal{L}(\lh,\lh):\lh\rightarrow\lh$.

Therefore, for any nonzero Hermitian $J$, $$(A_1\otimes A_2\otimes\cdots A_n)J(A_1\otimes A_2\otimes\cdots A_n)^{\dag}$$ forms a basis of $\lh$.

STEP 2 (b): In this part, we suppose $H_1,H_2,\cdots,H_t$ with $t=\Pi_{i=1}^n d_i^2-1$ be a set of linear independent ``free" Hermitian matrices. We use the notation
$||\cdot||$ to denote the two norm of the matrix.

We first let $\widetilde{H_1},\widetilde{H_2},\cdots,\widetilde{H_t}$ be the dual basis of $H_1,H_2,\cdots,H_t$.
That is, $\tr(\widetilde{H_i}H_j)=\delta_{i,j}$ for $1\leq i,j\leq t$.

For Hermitian $Y=\sum_{i=1}^t \mu_i H_i$ such that $||Y||=1$, we have $\mu_i=\tr(Y\widetilde{H_i})$. Therefore, $|\mu_i|\leq \sqrt{\tr{\widetilde{H_i}^2}}$.

Now we consider the $t$ matrices $Y_k=\sum_{i=1}^k \mu_i H_i)>0$ for $1\leq k\leq t$. Let $q=\max_{k=1}^t\{\lambda_k+\nu_k:1\leq k\leq t\}$
where $\lambda_k$ denotes the maximal eigenvalue of $Y_k$ and $\nu_k$ denotes the absolution of the minimal eigenvalue of $Y_k$.

For given $\widetilde{\rho}>0$ with $a>0$ being its minimal eigenvalue, we choose $r=\frac{a}{2q}$, then
for any real number $r'$ with $|r'|<r$, and any real numbers $\mu_1,\mu_2,\cdots,\mu_t$ with $||\sum_{i=1}^t \mu_i H_i||=1$,
we have $\widetilde{\rho}+r'(\sum_{i=1}^k \mu_i H_i)>0$ for all $1\leq k\leq t$.

Now we can see that if $\widetilde{\rho}\in \widetilde{\mathcal{P}}$, then for any $M$ with $||\widetilde{\rho}-M||<r$,
we can have $M\in \widetilde{\mathcal{P}}$ by the following argument. Write $M=\widetilde{\rho}+r'Y$ with $Y=\sum_{i=1}^t \mu_i H_i$ and $||Y||=1$, then $r'<r$.
Therefore, $\widetilde{\rho}+r'(\sum_{i=1}^k \mu_i H_i)>0$ for all $1\leq k\leq t$. As $H_1$ is ``free", and $\widetilde{\rho}+r'\mu_1 H_1>0$, we have
$\widetilde{\rho}+r'\mu_1 H_1\in\widetilde{\mathcal{P}}$,$\cdots$, $M=\widetilde{\rho}+r'Y \in \widetilde{\mathcal{P}}$.

If $\widetilde{\rho}\in \widetilde{\mathcal{D}}\setminus\widetilde{\mathcal{P}}$, then for any $M$ with $||\widetilde{\rho}-M||<r$,
we can have $M\in \widetilde{\mathcal{D}}\setminus\widetilde{\mathcal{P}}$ by the following argument. Write $M=\widetilde{\rho}+r'Y$ with $Y=\sum_{i=1}^t \mu_i H_i$ and $||Y||=1$, then $r'<r$.
Therefore, $\widetilde{\rho}+r'(\sum_{i=1}^k \mu_i H_i)>0$ for all $1\leq k\leq t$. As $H_1$ is ``free", and $\widetilde{\rho}+r'\mu_1 H_1>0$, we have
$\widetilde{\rho}+r'\mu_1 H_1\in\widetilde{\mathcal{D}}\setminus\widetilde{\mathcal{P}}$,$\cdots$, $M=\widetilde{\rho}+r'Y \in \widetilde{\mathcal{D}}\setminus\widetilde{\mathcal{P}}$.

Now suppose $0<\widetilde{\rho}\in \widetilde{\mathcal{P}}$ and $0<\widetilde{\sigma}\in \widetilde{\mathcal{D}}\setminus\widetilde{\mathcal{P}}$.
The for any $0\leq l\leq 1$, $M_l=l\widetilde{\rho}+(1-l)\widetilde{\sigma}>0$. Let
$$l:=\sup\{l:M_x\in \widetilde{\mathcal{P}}~~\forall x\leq l\}.$$
Notice that there is a ball of center $\widetilde{\rho}$ lying in $\widetilde{\mathcal{P}}$, then $l>0$.
Also there is a ball of center $\widetilde{\sigma}$ lying in $\widetilde{\mathcal{D}}\setminus\widetilde{\mathcal{P}}$, then $l<1$.
Now we consider the object $M_l$. If $M_l\in \widetilde{\mathcal{P}}$, then there is a ball of radius $r>0$ and center $M_l$ lying in $\widetilde{\mathcal{P}}$,
then there is $\tilde{r}>0$ such that $M_x\in \widetilde{\mathcal{P}}~~\forall x\leq l+\tilde{r}$, contradict to the definition of $l$.
Therefore, $M_l\in \widetilde{\mathcal{D}}\setminus\widetilde{\mathcal{P}}$. Then there is a ball $B$ of center $M_l$ lying in
$\widetilde{\mathcal{D}}\setminus\widetilde{\mathcal{P}}$. Notice that
$$\{M_x: x\leq l\}\bigcap B\neq \emptyset.$$
This is not possible as $\{M_x: x\leq l\}\subset \widetilde{\mathcal{P}}$ and $B\subset \widetilde{\mathcal{D}}\setminus\widetilde{\mathcal{P}}$.

Therefore, there is no informationally-incomplete measurements
which can detect of property $\widetilde{\mathcal{P}}$ of $\widetilde{\mathcal{D}}$ with certainty.

Thus, there is no informationally-incomplete measurements
which can detect of property $\mathcal{P}$ of $\mathcal{D}$ with certainty.
\end{proof}
Almost all properties about multipartite correlations we are interested in are SLOCC invariant.
Theorem 1 indicates that for detecting almost any multipartite correlations, fully state tomography is needed.
In other words, exponential measurement resources are necessary.

In the following, we will applying our result on some examples.
\begin{example}
$\mathcal{P}$ is the set of all PPT states, $i.e.$, states with positive partial transpose.

One can verify that $\mathcal{P}$ is SLOCC invariant.
Obviously, $0<I/t\in \mathcal{P}$, and for sufficient small $x>0$, $xI/t+(1-x)\op{\Phi}{\Phi}\in\mathcal{D}\setminus \mathcal{P}$ with $\ket{\Phi}$ being an entangled pure states.

Applying Theorem 1, we know that fully state tomography is necessary to determine with certainty whether an unknown states is PPT or not.
\end{example}
\begin{example}
$\mathcal{P}$ is the set of all entangled states.

Again, we can use the above arguments. One can verify that $P$ is SLOCC invariant.
Also, $0<I/t\in \mathcal{D}\setminus \mathcal{P}$, and for sufficient small $x>0$, $xI/t+(1-x)\op{\Phi}{\Phi}\in\mathcal{P}$ with $\ket{\Phi}$ being an entangled pure states.

Applying Theorem 1, we know that fully state tomography is necessary to determine with certainty whether an unknown states is entangled or not.
\end{example}

\begin{example}
$\mathcal{P}$ is the set of all states whose entanglement depth is $k$.

Clearly, $\mathcal{P}$ is SLOCC invariant.

If $k=1$, $0<I/t\in \mathcal{P}$, and for sufficient small $x>0$, $xI/t+(1-x)\op{\Phi}{\Phi}\in\mathcal{D}\setminus\mathcal{P}$.
Applying Theorem 1, we know that fully state tomography is necessary to determine with certainty whether the entanglement depth of an unknown states is $k$ or not.

If $1<k\leq n$, $0<I/t\in \mathcal{D}\setminus\mathcal{P}$, and for sufficient small $x>0$, $xI/t+(1-x)\op{\Phi}{\Phi}\in\mathcal{P}$
with $\ket{\Phi}$ being an entangled pure states with depth $k$.
Applying Theorem 1, we know that fully state tomography is necessary to determine with certainty whether the entanglement depth of an unknown states is $k$ or not.

If $k>n$, $\mathcal{P}=\emptyset$, no measurement is needed.
\end{example}

\begin{example}
$\mathcal{P}$ is the set of all genuine entangled states (in any definition given in Section III.C).

One can verify that $\mathcal{P}$ is SLOCC invariant.

$0<I/t\in \mathcal{D}\setminus\mathcal{P}$, and for sufficient small $x>0$, $xI/t+(1-x)\op{\Phi}{\Phi}\in\mathcal{P}$
with $\ket{\Phi}$ being an entangled pure states.

Applying Theorem 1, we know that fully state tomography is necessary to determine with certainty whether an unknown state is genuine entangled or not.

\end{example}

\section{Pure State Entanglement Testing}

In this section, we study the possibility of detecting multipartite correlation
without full state tomography by measuring only single-copy
observables. For simplicity, we allow single-copy
observables and we allow adaptive orocedures.
We provide a lower bound together with an adaptive procedure with almost matching upper bound.

We assume that the state is a pure state, and the only know information about this state is the Hilbert space it lives in.
We want to test whether the state is product or entangled.

Let $\h=\bigotimes_{k=1}^{n}\h_k$ with $d_k$ being the dimension of $\h_k$ and $d_1\geq d_2\geq\cdots\geq d_n$.
The set (state space) of pure state on $\h$ is defined as
$$\{\ \ket{\psi}: \ip{\psi}{\psi}=\mbox{1}\}\subset\h.$$

Our problem can be formalized as following£º

Given a pure quantum $\ket{\psi}$, how many ``local" measurements are needed to verify whether it is product, with in form $\otimes_{k=1}^n\ket{\psi_k}$, or not,
where a measurement is called ``local" if it applied only on one system nontrivially, say $\h_1$, or $\h_2$, or $\cdots$, or $\h_n$.

One can observe the following: $\ket{\psi}$ is product if and only if $\psi_{k}$ is a pure state for any $1\leq k\leq n$ with $\psi_k$
denoting the reduced density operator in $\h_k$. In other words, for any $k$, the resulting operator is pure (rank 1) after tracing out all other system except $k$.

We observe the following lower bound.
\begin{theorem}
Any local ``procedures" that can detect whether an $n$-partite pure state of $\h$ is product or not, must accomplish the pure state tomography of at least $n-1$ parties. Furthermore, at least $\sum_{k=2}^n 2(d_k-1)$ observables are necessary to detect product property.
\end{theorem}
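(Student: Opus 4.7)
The plan exploits the observation (already noted in the excerpt) that a pure state $\ket{\psi}$ on $\bigotimes_{k=1}^n \h_k$ is product iff every marginal $\psi_k$ is pure; iterating the bipartite Schmidt fact ``one pure marginal implies the state factorizes across that party,'' purity of any $n-1$ of the $\psi_k$'s already forces $\ket{\psi}$ to be a full product. The plan is therefore to prove that on at least $n-1$ parties the local observables applied must accomplish pure-state tomography on $\h_k$, i.e., the outcome map $\ket{\phi}\bra{\phi}\mapsto(\tr(O_l^{(k)}\ket{\phi}\bra{\phi}))_l$ from $\mathbb{CP}^{d_k-1}$ to $\mathbb{R}^{m_k}$ is injective. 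Given this structural claim, the per-party count $m_k\geq 2(d_k-1)$ follows at once from invariance of domain: a compact real manifold of dimension $2(d_k-1)$ admits no continuous injection into $\mathbb{R}^{m_k}$ when $m_k<2(d_k-1)$.

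I would establish the structural claim by contradiction. Assume two distinct parties $i,j$ both fail pure-state tomography, and construct a product pure state and a non-product pure state indistinguishable by the protocol. The technical step is to strengthen ``non-injectivity'' into the existence of an \emph{orthogonal} pair $\ket{a_k},\ket{b_k}\in\h_k$ with $\bra{a_k}O\ket{a_k}=\bra{b_k}O\ket{b_k}$ for every local observable $O$ on party $k$; equivalently, the kernel of the functionals $H\mapsto(\tr(O_l^{(k)}H))_l$ on traceless Hermitian operators contains a nonzero element of the form $\ket{a_k}\bra{a_k}-\ket{b_k}\bra{b_k}$. I would obtain this via a dimension count inside the $(d_k^2-1)$-dimensional traceless Hermitian space: the kernel has real dimension $\geq (d_k-1)^2+1$, the closed $U(d_k)$-orbit of $\mathrm{diag}(1,-1,0,\ldots,0)$ (stabilizer $U(1)^2\times U(d_k-2)$) has dimension $4d_k-6$, and the sum of the two codimensions is at most $d_k^2-2<d_k^2-1$, forcing a nontrivial intersection for $d_k\geq 2$. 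With such orthogonal pairs on parties $i$ and $j$, take
\[
\ket{\psi}_{\text{prod}}=\ket{a_i}\ket{a_j}\ket{\phi_{\text{rest}}}, \qquad \ket{\psi}_{\text{ent}}=\tfrac{1}{\sqrt{2}}\bigl(\ket{a_i}\ket{a_j}+\ket{b_i}\ket{b_j}\bigr)\ket{\phi_{\text{rest}}}.
\]
Orthogonality of $\ket{a_j},\ket{b_j}$ kills the cross terms in the partial trace, so the reduced state on party $i$ becomes $\tfrac{1}{2}(\ket{a_i}\bra{a_i}+\ket{b_i}\bra{b_i})$, which yields the same local outcomes as the pure $\ket{a_i}\bra{a_i}$ because $\ket{a_i}$ and $\ket{b_i}$ share outcomes; the symmetric computation (using orthogonality of $\ket{a_i},\ket{b_i}$) handles party $j$, and on every other party the two states share the same pure marginal $\ket{\phi_k}\bra{\phi_k}$. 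An adaptive protocol depends on observed values alone, so it follows the same branch of its decision tree and returns the same verdict on both states — yet $\ket{\psi}_{\text{ent}}$ has Schmidt rank $2$ across the $(i,j)$ bipartition and is therefore not product, contradiction.

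Combining everything, at least $n-1$ parties must carry $m_k\geq 2(d_k-1)$ observables; the total $\sum_k m_k$ is then minimized by reserving this cost for the $n-1$ parties of smallest dimension, i.e., by omitting $d_1$, yielding the claimed bound $\sum_{k=2}^n 2(d_k-1)$. The main technical obstacle I anticipate is making the ``sum of codimensions forces intersection'' step fully rigorous at the boundary case $m_k=2(d_k-1)-1$: naive transversality is only heuristic, and I would back it up by treating the $U(d_k)$-orbit in question as a closed real algebraic variety and invoking a degree-theoretic or convexity argument, or equivalently by producing the rank-2 kernel element explicitly from a basis of the kernel of the measurement functionals.
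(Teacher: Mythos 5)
Your proposal follows the same skeleton as the paper's proof: reduce product testing to per-party pure-state tomography (via ``product iff every marginal is pure''), show that two deficient parties allow a product state and an entangled state with identical local statistics, and then count observables on the $n-1$ parties of smallest dimension. The one step where you genuinely diverge is also the one with a gap. To obtain an \emph{orthogonal} confusable pair you invoke a dimension count against the $U(d_k)$-orbit of $\mathrm{diag}(1,-1,0,\dots,0)$; as you yourself flag, complementary dimensions do not force a linear subspace to meet a compact orbit (in $\mathbb{R}^3$ a small sphere and a distant plane through the origin have codimension sum $2<3$ yet are disjoint), so as written this step does not close. There is also a mismatch of hypotheses: the orbit argument presupposes $m_k\le 2(d_k-1)-1$ so that the kernel has dimension $\ge (d_k-1)^2+1$, whereas your structural claim is phrased in terms of non-injectivity of the outcome map, which by itself gives no upper bound on $m_k$.

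Fortunately the lemma you want is true and needs no transversality at all. If $H=\ket{a}\bra{a}-\ket{b}\bra{b}$ is a nonzero element of the kernel $K$ of the measurement functionals, then $H$ is traceless, Hermitian, of rank two, with eigenvalues $\pm s$ where $s=\sqrt{1-|\ip{a}{b}|^2}>0$; since $K$ is a linear subspace, $H/s\in K$, and $H/s$ has eigenvalues $\pm 1$, hence equals $\ket{a'}\bra{a'}-\ket{b'}\bra{b'}$ for an \emph{orthonormal} pair $\ket{a'},\ket{b'}$. With that one-line repair your construction $\frac{1}{\sqrt 2}\bigl(\ket{a_i}\ket{a_j}+\ket{b_i}\ket{b_j}\bigr)\ket{\phi_{\mathrm{rest}}}$ goes through and is arguably cleaner than the paper's, which keeps a non-orthogonal confusable pair and instead builds $\ket{\Omega}_{12}$ by matching the spectra of $\lambda\op{\psi_1}{\psi_1}+(1-\lambda)\op{\phi_1}{\phi_1}$ and $\mu\op{\psi_2}{\psi_2}+(1-\mu)\op{\phi_2}{\phi_2}$ for small $\lambda$ (the Schmidt condition your orthogonal pair satisfies automatically). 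Your use of invariance of domain to justify $m_k\ge 2(d_k-1)$ is more rigorous than the paper's bare parameter count, and your explicit minimization over which party to exempt (dropping the largest $d_1$) makes precise a step the paper leaves implicit.
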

\begin{proof}
As we observed, multipartite entanglement detecting corresponds to purity testing of each parties.

For a given $\sigma_k\in\mathcal{D}_k$, detect whether $\sigma_k$ is pure or not, where
$\mathcal{D}_k$ denotes the mixed state space of $\h_k$,
$$\mathcal{D}_k=\{\ \rho_k\in\lh_k\ :\rho_k\geq 0, \tr(\rho)=
\mbox{1}\}.$$
We first observe that purity testing must accomplish the task of pure state tomography.
In other words, for different pure state $\ket{\psi_{k}},\ket{\phi_{k}}\in \h_k$, the purity testing should be able to distinguish them.
Otherwise, by linearity, it can not distinguish $\op{\psi_k}{\psi_k}$ and $1/2\op{\psi_k}{\psi_k}+1/2\op{\phi_k}{\phi_k}$, where the former one is pure and
the later one is not a pure state. The procedure of testing purity can not determine to output $0$ (pure) or $1$ (not pure).

Suppose for parties $\h_1$ and $\h_2$, the procedure does not accomplish the pure state tomography.
In other words, there exist $\ket{\psi_{1}},\ket{\phi_{1}}\in \h_1$ and $\ket{\psi_{2}},\ket{\phi_{2}}\in \h_2$ such that the procedure can not distinguish them. Then there exist a entangled pure bipartite state $\ket{\Omega}_{12}\in\h_1\otimes\h_2$ such that its reduced density matrices $\Omega_1=\lambda\op{\psi_1}{\psi_1}+(1-\lambda)\op{\phi_1}{\phi_1}$, and $\Omega_2=\mu\op{\psi_2}{\psi_2}+(1-\mu)\op{\phi_2}{\phi_2}$ for some $0<\lambda,\mu<1$.
It is equivalent to find $0<\lambda,\mu<1$ such that $\lambda\op{\psi_1}{\psi_1}+(1-\lambda)\op{\phi_1}{\phi_1}$ and $\mu\op{\psi_2}{\psi_2}+(1-\mu)\op{\phi_2}{\phi_2}$
share the eigenvalues. We only need to choose $\lambda$ to be some very small positive number, then the corresponding $\mu$ does exist.
Now, the procedure can not distinguish the following entangled state
$$\ket{\Omega_{12}}\otimes\ket{\psi_3}\otimes\cdots\otimes\ket{\psi_n}$$
and product state
$$\ket{\psi_1}\otimes\ket{\psi_2}\otimes\ket{\psi_3}\otimes\cdots\otimes\ket{\psi_n},$$
contradict to the assumption that the procedure can detect product property.

Therefore, the procedure must accomplish the pure state tomography of at least $n-1$ parties.

Notice that $d$-dimensional pure state tomography requires $2d-2$ observables as $d$-dimensional pure state has $2d-2$ free real parameters.
Thus, at least $\sum_{k=2}^n 2(d_k-1)$ observables are necessary to detect product property.
\end{proof}

For non-adaptive procedure, the lower bound becomes $\sum_{k=2}^n (4d_k-5)$ as the non-adaptive pure state tomography has lower bound \cite{HeMM12}.

Notice that we do not need to accomplish the purity testing for each party since we have the constrain that the whole state is pure. In that sense $n-1$ parties are enough.

In the following, we provide an upper bound of detecting multipartite entanglement by presenting an algorithm.
We suppose subsystem $\h_k$ with orthornormal basis
$\ket{0},\cdots,\ket{d_k-1}$.

\begin{algorithm}
Let the unknown pure state $\ket{\psi}\in\h$ has $\psi_k$ be its reduced density matrix in subsystem $\h_k$\;
The algorithm output $0$ if $\ket{\psi}$ is product, $1$ if $\ket{\psi}$ is entangled\;
$k\leftarrow 2$\;
$b\leftarrow 0$\;
\While{$b=0$}
{
 $l\leftarrow0$\;
   \While{$\tr(\op{l}{l}\psi_k)=0$}
   {
   $l\leftarrow l+1$\;
   }
   $\alpha_{l,k}\leftarrow\sqrt{\tr(\op{l}{l}\psi_k)}$\;
   $s\leftarrow \alpha_{l,k}^2$\;
   \For {$j = l+1\to d_k-1$}
   {
   Measure $\psi_k$ by $F_j+G_j=\op{j}{l}+\op{l}{j}$\;
   $x\leftarrow \tr[(F_j+G_j)\psi_k]$\;
   Measure $\psi_k$ by $F_j-G_j=i(\op{j}{l}-\op{l}{j})$\;
   $y\leftarrow\tr[(F_j-G_j)\psi_k]$\;
   $\alpha_{j,k}\leftarrow (x+iy)/(2\alpha_{l,k})$\;
   $s\leftarrow s+|\alpha_{j,k}|^2$\;
   }
   \If{$s\neq 1$}
   {
   $b\leftarrow 1$\;
   }
   \Else
   {
   $k\leftarrow k+1$\;
   }
}
Output $b$\;
\caption{Pure Entanglement Testing}
\end{algorithm}

We have the following result.
\begin{theorem}
Algorithm 1 accomplishes the pure entanglement testing in $\h$ by using at most $\sum_{k=2}^n (2d_k-1)$ observables.
\end{theorem}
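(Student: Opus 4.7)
The plan is to prove two things about Algorithm 1: that it correctly decides product versus entangled, and that its measurement count is bounded by $\sum_{k=2}^n (2d_k-1)$.

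First I would make a preliminary reduction. An $n$-partite pure state $\ket{\psi}$ is product if and only if each single-party marginal $\psi_k$ is pure (rank one) for every $k=2,\dots,n$. The forward direction is immediate. For the converse, if $\psi_k$ is pure, then taking the Schmidt decomposition across the cut (party $k$) versus the rest forces $\ket{\psi}=\ket{\phi_k}\otimes\ket{\chi_k}_{\bar k}$; iterating this peeling-off across $k=2,\dots,n$ produces a full product form, with purity of $\ket{\psi}$ itself pinning down party $1$. So it suffices for the inner loop to correctly decide purity of each $\psi_k$.

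Next I would analyze the inner loop at a fixed $k$. The while-loop increments $l$ until $\langle l|\psi_k|l\rangle>0$. Because $\psi_k\succeq 0$, each earlier vanishing diagonal forces $\psi_k|l'\rangle=0$, so all rows and columns with index $l'<l$ vanish. The pair $F_j+G_j$ and $F_j-G_j$ is chosen precisely so that $x+iy=2\langle j|\psi_k|l\rangle$, which makes the computed value $\alpha_{j,k}=\langle j|\psi_k|l\rangle/\alpha_{l,k}$. A short calculation using $\sum_j|\langle j|\psi_k|l\rangle|^2=\langle l|\psi_k^2|l\rangle$ (together with the vanishing at $l'<l$) collapses the accumulated quantity to
\[
s \;=\; \frac{\langle l|\psi_k^2|l\rangle}{\langle l|\psi_k|l\rangle}.
\]

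The decisive step is showing $s=1$ if and only if $\psi_k$ is pure. Writing the spectral decomposition $\psi_k=\sum_m p_m|v_m\rangle\langle v_m|$, the ratio is a weighted average of the eigenvalues $p_m$ with non-negative weights $p_m|\langle l|v_m\rangle|^2$ (some of which are strictly positive since $\alpha_{l,k}>0$). Because $0\le p_m\le 1$ and $\sum_m p_m=1$, such an average can equal $1$ only if every eigenvalue contributing positive weight equals $1$, which by the trace constraint forces $\psi_k$ to be rank one. I expect this little spectral argument (equivalently, the fact that $\langle l|\psi_k^2|l\rangle\le \|\psi_k\|_{\mathrm{op}}\langle l|\psi_k|l\rangle$ with $\|\psi_k\|_{\mathrm{op}}\le 1$, equality iff $\psi_k$ pure) to be the only real content of the proof; the rest is bookkeeping.

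For the measurement count, the loop on subsystem $k$ uses $l+1$ diagonal observables (the while-loop) and $2(d_k-1-l)$ off-diagonal observables (the for-loop), totalling $2d_k-1-l\le 2d_k-1$. Summing over $k=2,\dots,n$, and noting that the algorithm may stop earlier once some $s\ne 1$ is found, yields at most $\sum_{k=2}^n(2d_k-1)$ observables, completing the proof.
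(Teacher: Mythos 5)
Your proof is correct, and it reaches the conclusion by a genuinely different route from the paper's in the one place that matters. Both arguments share the same skeleton: reduce "product" to "each marginal $\psi_k$, $k=2,\dots,n$, is pure," observe that the vanishing diagonals force the first $l$ rows and columns of $\psi_k$ to vanish, identify $\alpha_{j,k}=\langle j|\psi_k|l\rangle/\alpha_{l,k}$ from the pair of off-diagonal observables, and count $2d_k-1-l\le 2d_k-1$ observables per party. Where you diverge is in certifying that the test "$s=1$" characterizes purity. The paper splits this into two directions: for a product input it writes $\psi_k$ as a ket with amplitudes $\beta_{j,k}$ and checks $s=\sum_j|\beta_{j,k}|^2=1$; for the converse it assumes $s=1$ and reconstructs $\psi_k$ entry by entry, using positive semidefiniteness of $2\times 2$ and $3\times 3$ principal submatrices (a determinant argument) to force $\psi_k=\op{\phi_k}{\phi_k}$. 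You instead collapse the accumulated statistic to the closed form $s=\langle l|\psi_k^2|l\rangle/\langle l|\psi_k|l\rangle$ and invoke a one-line spectral inequality ($s$ is a weighted average of eigenvalues of $\psi_k$, hence $s\le 1$ with equality iff $\psi_k$ is rank one), handling both directions simultaneously; the only point worth spelling out is that an eigenvector with eigenvalue in $(0,1)$ orthogonal to $\ket{l}$ contributes zero weight, but then positivity of $\langle l|\psi_k|l\rangle$ forces some unit eigenvalue anyway, which you have implicitly. Your version is shorter and arguably more transparent; the paper's version has the side benefit of exhibiting the algorithm as an explicit pure-state tomography of each marginal, which connects to the lower bound of its Theorem 2.
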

\begin{proof}
To prove Algorithm 1 accomplishes the pure entanglement testing in $\h$, we need to show two directions.

One direct is Algorithm 1 output $0$ if $\ket{\psi}$ is product. In other words, $\psi_k$ is pure for any $1\leq k\leq n$.
As $\ket{\psi}$ is pure, we only need to prove that $\psi_k$ is pure for any $2\leq k\leq n$.

Assume
\begin{equation*}
\ket{\psi_k}=\sum_{m=0}^{d_k-1}\beta_{m,k}\ket{m}.
\end{equation*}
According to the protocol, at Line 7, we measure $\ket{\psi_k}$ using measurements $E_l$ sequentially until $\tr(\ket{\psi_k}\bra{\psi_k} E_l)$ is non-zero, where $E_m=\op{m}{m}$. The goal is to find the smallest $l$ such that $\beta_l\neq 0$. The state becomes
\begin{equation*}
\ket{\psi_k}=\sum_{m=l}^{d_k-1}\beta_{m,k}\ket{m},
\end{equation*}
where the summation starts from $m=l$ now.
Now we know that $\alpha_{k,l}=\beta_{l,k}=\sqrt{\tr(\ket{\psi}\bra{\psi} E_k)}$ is positive since the global phase of a quantum state is ignorable.

The goal of Line 12 to Line 16 is to obtain $\beta_{m,j}$ for all $m\geq l$ by employing the coherence between $\ket{m}$ and $\ket{l}$. In terms of density matrix, our protocol actually provides the ($j+1$)-th row of $\op{\psi}{\psi}$.
Now we have
\begin{align*}
x=\langle\psi_k|(F_j+G_j)\ket{\psi_k}&=&\beta_{l,k}\beta_{j,k}+\beta_{l,k}\beta_{j,k}^*, \\ \nonumber
y=\langle\psi_k|(F_j-G_j)\ket{\psi_k}&=&i(\beta_{l,k}\beta_{j,k}-\beta_{l,k}\beta_{j,k})^*.
\end{align*}
As we have assumed that $\beta_{l,k}$ is real, it is obvious that $\beta_{l,k}^*\beta_{j,k}^*=\beta_{l,k}\beta_{j,k}^*$ for all $j>l$. Therefore, we can calculate the exact value of $\beta_{j,k}$ since we know the non-zero $\alpha_k$ and $\beta_{l,k}\beta_{j,k}$ from our measurements.
\begin{align*}
\beta_{j,k}=\frac{x+iy}{2\alpha_{l,k}}=\alpha_{j,k}.
\end{align*}
According to the fact that $\ket{\psi_k}$ is a normalized pure state, we have
\begin{align*}
\sum_{j=l}^{d_k-1}|\alpha_{j,k}|^2=\sum_{j=l}^{d_k-1}|\beta_{j,k}|^2=\ip{\psi_k}{\psi_k}=1.
\end{align*}
Therefore, if all $\psi_k$ are pure state, then Line 18-19 of Algorithm 1 will never be called. That means, the output $b$ is $0$.

In the next, we show the other direction. If $\ket{\psi}$ is entangled, then Algorithm 1 outputs $1$.
To derive a contradiction, we assume that Algorithm 1 outputs $0$ for some entangled $\ket{\psi}$.
We first notice that if $\ket{\psi}$ is entangled, there exist $k>1$ such that $\psi_k$ is not a pure state.
In the next, we suppose there is the smallest $p>1$ such that $\psi_p$ is not a pure state.
According to the previous argument, then the execution of Line 5-21 in Algorithm for such all $1<k<p$, would not change the value of $b$ as $\psi_k$ is pure state here.
If the value of $b$ is not changed after the execution of Line 5-21 in Algorithm for $k=p$, we know that $\sum_{j=l}^{d_k-1}|\alpha_{j,k}|^2=1$.
Therefore, we can define a pure state as follows
\begin{align*}
\ket{\phi_k}=\sum_{m=0}^{d_k-1}\alpha_{m,k}\ket{m}.
\end{align*}
We prove that $\psi_k=(r_{ka,kb})_{d_k\times d_k}$ is pure by showing $\psi_k=\op{\phi_k}{\phi_k}$.

For $m<l$, we have
\begin{align*}
r_{km,km}=\tr(\psi_k\op{m}{m})&=\tr(\op{\phi_k}{\phi_k}\op{m}{m})=0,\\
r_{kl,kl}=\tr(\psi_k\op{l}{l})&=\tr(\op{\phi_k}{\phi_k}\op{l}{l})=\alpha_{k,l}^2.
\end{align*}
For $l\leq m\leq d_k-1$, we have
\begin{align*}
r_{km,kl}=\tr(\psi_k\op{l}{m})&=\tr(\op{\phi_k}{\phi_k}\op{l}{m})=\alpha_{k,l}\alpha_{k,m}.
\end{align*}
As $\psi_k$ is semi-definite positive, we know that the first $l$ rows and columns of $\psi_k$ are all zero.

For any $l\leq m\leq d_k-1$, we choose the sub-matrix of $\psi_k$ of $\{\ket{l},\ket{m}\}\times\{\bra{l},\bra{m}\}$,
\[
{\begin{bmatrix}\alpha_{k,l}^2&\alpha_{k,l}\alpha_{k,m}^*\\ \alpha_{k,l}\alpha_{k,m}&r_{km,km}\\ \end{bmatrix}}
\]
This sub-matrix is also semi-definite positive. Thus,
$$
r_{km,km}\geq |\alpha_{k,m}|^2.
$$
According to $\tr(\psi_k)=1$, we have
$$
1=\sum_m r_{km,km}\geq \sum_m |\alpha_{k,m}|^2=1.
$$
Thus, $r_{km,km}=|\alpha_{k,m}|^2$.

Now for any $m,s>l$, we choose the sub-matrix of $\psi_k$ of $\{\ket{l},\ket{m},\ket{s}\}\times\{\bra{l},\bra{m},\bra{s}\}$,
\[
{\begin{bmatrix}\alpha_{k,l}^2&\alpha_{k,l}\alpha_{k,m}^*&\alpha_{k,l}\alpha_{k,s}^*\\\alpha_{k,l}\alpha_{k,m}&|\alpha_{k,m}|^2 &r_{km,ks}^*
\\ \alpha_{k,l}\alpha_{k,s}&r_{km,ks}&|\alpha_{k,s}|^2\\ \end{bmatrix}}.
\]
According to its positivity of determinant, we have $r_{km,ks}=\alpha_{k,s}\alpha_{k,m}$.
That is $\psi_k=\op{\phi_k}{\phi_k}$.
This contradict to our assumption that $\psi_k$ is not pure.
Therefore, if $\ket{\psi}$ is entangled, Algorithm 1 would output $1$.

For each $k>1$, the execution of testing $\psi_k$ uses at most $2d_k-1$ observables: Line 6-7 uses $l$ observables, Line 11-17 uses $2d_k-l$ observables. In total, ALgorithm 1 uses at most $\sum_{k=2}^n (2d_k-1)$ observables.
\end{proof}

The gap between our upper bound and lower bound is at most $n-1$.

\section{Conclusion}
In this paper, we study this problem of certifying entanglement without tomography in the constrain that only single copy measurements can be applied.
We show that almost all multipartite correlation, include genuine entanglement detection, entanglement depth verification,
requires full state tomography. However, universal entanglement detection among pure states can be much more efficient, even we only allow local measurements. Almost optimal adaptive local measurement scheme for detecting pure states entanglement is provided.

There are still many interesting open problems related to this topic. An immediate one is to generalize Theorem 1.
There are two possible directions, about local unitary invariant property and adaptive measurements.

This work is supported by DE180100156.


\end{document}